\newcommand{\seqnum}[1]{\cite[\href{http://oeis.org/#1}{\underline{#1}}]{OEIS}}
\let\seqnump\seqnum
\newcommand{\darkcirc}{\bullet}
\newcommand{\Bx}{\mbox{$\lozenge$}}
\newcommand{\Lf}{\TR{$\Bx$}}
\newcommand{\La}{\left\langle}
\newcommand{\Ra}{\right\rangle}
\newcommand{\Lt}{\mbox{\scriptsize$\triangle$}}
\newcommand{\Dt}{\mbox{$\blacktriangle$}}
\newcommand{\Lc}{\mbox{$\circ\!\!\circ\!\!\circ$}}
\newcommand{\Dc}{\mbox{$\darkcirc\!\!\darkcirc\!\!\darkcirc$}}
\newcommand{\NE}{\ensuremath\nearrow}
\newcommand{\SE}{\ensuremath\searrow}
\newtheorem{theorem}{Theorem}
\newtheorem{corollary}[theorem]{Corollary}
\newtheorem{proposition}[theorem]{Proposition}
\title{Nonleaf Patterns in Trees:\\ Protected Nodes and Fine Numbers}
\author{Nachum Dershowitz\\
School of Computer Science, Tel Aviv University\\Ramat Aviv, Israel\\
\href{mailto:nachum@cs.tau.ac.il}{\tt nachum@cs.tau.ac.il}}
\date{\today}
\begin{document}
\maketitle

\epigraph{\small The barren and fertile fronds [of the Sensitive Fern] are extremely unlike, the former being leaf-like, very sensitive to frost, quickly wilting when plucked, and much
taller and more common than the latter which are non-leaf-like and remain erect, though drying up, through the winter.}{---\textit{Transactions of the Royal Society of Canada} (1884)}

\begin{abstract}
A closed-form formula is derived for  the number of occurrences of matches of a multiset of patterns
among all  ordered (plane-planted) trees with a given number of edges.
A pattern looks like a tree, with internal nodes and leaves,
but also contain components that match subtrees or sequences of subtrees.
This result extends previous versatile tree-pattern enumeration formul\ae{} to incorporate components that are only allowed to match nonleaf subtrees 
and provides enumerations of trees by the number of protected
(shortest outgoing path has two or  more edges) or unprotected  nodes.
\end{abstract}

\section{Tree Patterns}

One routinely asks how often certain patterns occur among a set of combinatorial objects, such as trees.
Here, we concern ourselves with patterns in ordered (plane-planted) trees, which
are counted by the Catalan numbers, $C_n=\frac1{n+1}\binom{2n}{n}$,
sequence \seqnum{A000108} in Sloane's \textit{Encyclopedia of Integer Sequences}~\cite{OEIS}. 
Lattice (Dyck) paths---which may not dip below the abscissa, and very many other interesting objects (including full binary trees), are their equinumerous analogues; see \cite[e.g.~Thm.~1.5.1]{Stanley}.
Ordered trees may be written down linearly as balanced nested brackets.
Each balanced bracket pair $\La\cdots\Ra$ corresponds to a single node.
For example, $\La\,\La\La\Ra\!\La\Ra\Ra\La\Ra\La\La\La\Ra\Ra\Ra\,\Ra$ is a seven-edge, eight-node tree, with four leaves $\La\Ra$
and four (nonleaf) internal nodes. 
Its root has three children, the middle one being a (childless) leaf. 
We refer to such a level-one leaf as a \emph{stump}.
This tree is drawn in Figure~\ref{fig}, with leaves labeled \textsf{w}, \textsf{x}, \textsf{y}, and \textsf{z}; it
corresponds to the path depicted in Figure~\ref{fig:path}, with peaks at $x=2,4,7,11$.
The tree stump \textsf{y} turns into the little hill in the middle of the path at $x=7$.
Stump-less trees are deemed ``protected'' and form the subject of Section~\ref{sec:pro}.
Hills  and stumps are further explored in Section~\ref{sec:stump}.

\begin{figure}[t]
\centering
\psset{levelsep=10mm,nodesep=-1pt,radius=1.5mm,tnpos=r,tnsep=1pt,tnyref=0.4}
\begin{pspicture}(-4,-3)(5,1){
\pstree{\TC~{\textsf{f}}}{
\pstree{\TC~{\textsf{g}}}{{\Lf~{\textsf{w}}}{\Lf~{\textsf{x}}}}
{\Lf~{\textsf{y}}}
\pstree{\TC~{\textsf{h}}}{\pstree{\TC~{\textsf{k}}}{{\Lf~{\textsf{z}}}}}}
}
\end{pspicture}
\caption{The 7-edge 4-leaf tree $\La\,\La\La\Ra\!\La\Ra\Ra\La\Ra\La\La\La\Ra\Ra\Ra\,\Ra$ with labeled nodes for reference.}\label{fig}
\end{figure}

\begin{figure}[t]
\centering
\begin{pspicture}(0,0)(14,4.5)
\rput(6.7,3){\includegraphics[bb=120 150 620 700,scale=.09]{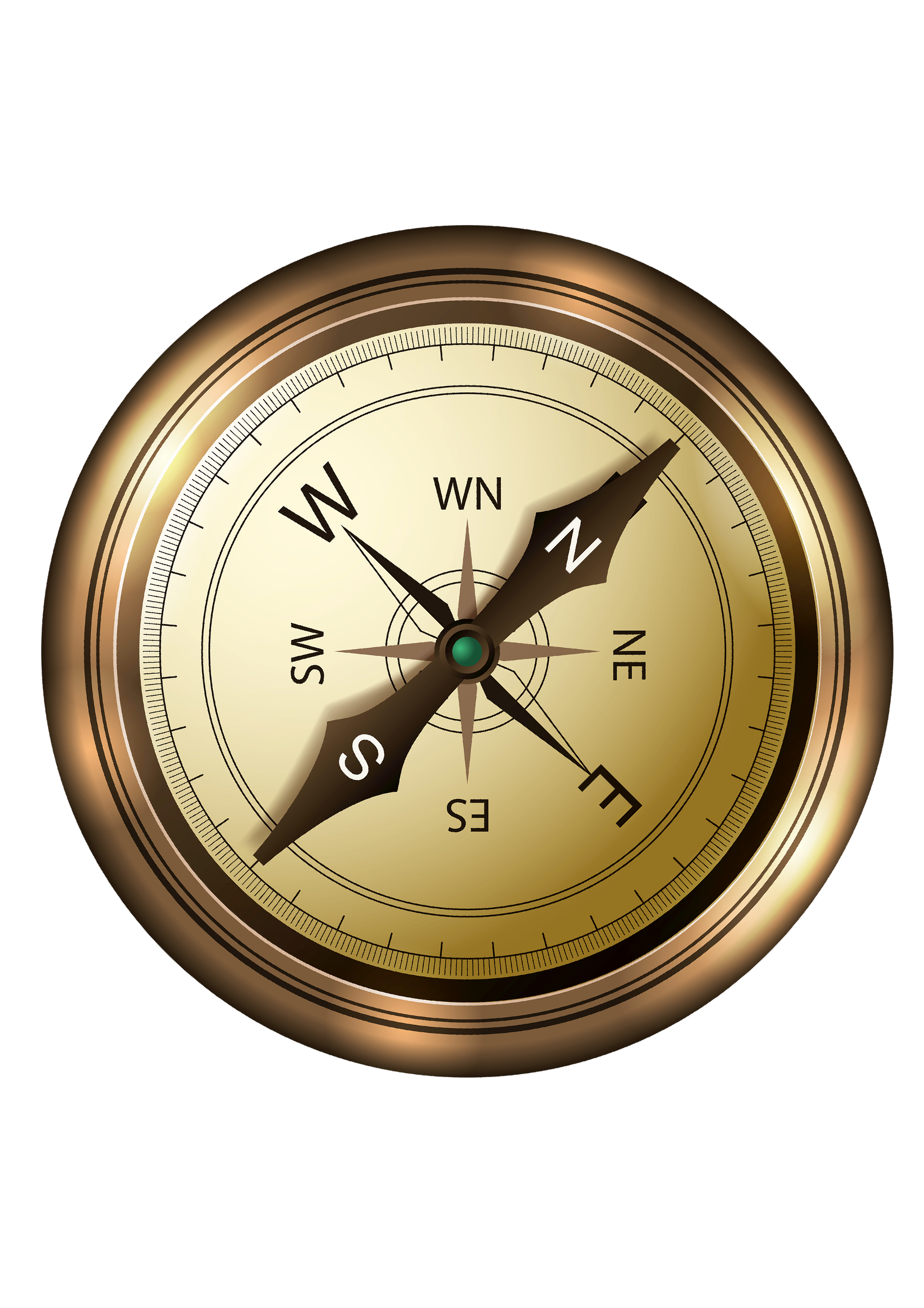}}
\psgrid[subgriddiv=1,griddots=14](0,0)(0,0)(14,4)
\psset{linecolor=red}
\pcline[linestyle=solid]{-}(0,0)(14,0)
\pcline[linestyle=solid]{-}(0,4)(14,4)
\psset{linecolor=blue,arrowsize=6pt}
\pcline[linestyle=solid]{->}(0,0)(1,1)
\pcline[linestyle=solid]{->}(1,1)(2,2)
\pcline[linestyle=solid]{->}(2,2)(3,1)
\pcline[linestyle=solid]{->}(3,1)(4,2)
\pcline[linestyle=solid]{->}(4,2)(5,1)
\pcline[linestyle=solid]{->}(5,1)(6,0)
\pcline[linestyle=solid]{->}(6,0)(7,1)
\pcline[linestyle=solid]{->}(7,1)(8,0)
\pcline[linestyle=solid]{->}(8,0)(9,1)
\pcline[linestyle=solid]{->}(9,1)(10,2)
\pcline[linestyle=solid]{->}(10,2)(11,3)
\pcline[linestyle=solid]{->}(11,3)(12,2)
\pcline[linestyle=solid]{->}(12,2)(13,1)
\pcline[linestyle=solid]{->}(13,1)(14,0)
\end{pspicture}
\medskip
\caption{The lattice path corresponding to the tree in Figure~\ref{fig}, consisting of 7 \NE-steps (pointing to the North) and 7 \SE-steps (East).}\label{fig:path}
\end{figure}

Motivated by enumeration problems such as those in \cite{Fine,protected}, we 
add the possibility of a sequence of nonleaf children to the versatile pattern enumerations
of~\cite{Pat,More}.
With this addition, patterns come in five basic shapes:
$\Bx$, $\Lt$, $\Dt$, $\Lc$, and $\Dc$.
\begin{itemize}\def\labelitemi{--}
\item A lozenge $\Bx$ corresponds to any tree leaf.
\item A light triangle pattern $\Lt$ matches any subtree.
\item A dark triangle $\Dt$ matches any \emph{nonleaf} subtree (that is,
a subtree rooted at an internal node).
\item A light ellipsis $\Lc$ can match any sequence of (zero or more) subtrees.
\item A dark ellipsis $\Dc$ can match any sequence of (zero or more) \emph{nonleaf} subtrees.
\end{itemize}

Basic patterns can be \emph{composed} to form more complicated shapes
by placing a sequence of patterns in angle brackets.
Specifically, tree patterns $P$ have the following grammar:
\begin{align*}
P &::= \Bx \mid \La Q\, Q\cdots Q\Ra\\
Q &::= \Bx \mid \Lt \mid \Dt \mid  \Lc \mid \Dc \mid \La Q\cdots Q\Ra 
\end{align*}
where $Q\cdots Q$ means zero or more patterns $Q$, in sequence,
and $Q\,Q\cdots Q$ means one or more.
The lozenge pattern $\Bx$ is the same as $\La\Ra$,
but we will need to control where leaves appear, so we only allow $\La\Ra$ 
within composites and require the use of $\Bx$ as a top-level pattern.
The triangle pattern $\Lt$ matches any subtree matched by either $\Bx$ or $\Dt$;
the ellipsis  $\Lc$ is tantamount to $\Lt\!\cdots\!\Lt$,
and the dark ellipsis  $\Dc$ is like  $\Dt\!\cdots\!\Dt$.
Both $\La\Lc\Ra$ and $\La\Dc\Ra$ match leaves, as they have zero subtrees.
(The basic patterns of~\cite{More} did not include the dark ellipsis;
those of~\cite{Pat} also excluded dark triangles.)
Given a bag (multiset) of patterns, we ask how many times those patterns occur in trees of a specified size.
Patterns may occur more than once or not at all in any given tree.

Two patterns \emph{overlap} if their instances share one or more nodes in the tree.
We aspire to count non-overlapping instances only.
More than one triangle or ellipsis can co-occur at the same point in a tree, without consuming any nodes.
For example, \Lt\@ and \Dt\@ each match any nonleaf subtree, 
but have no nodes of their own;
were we to want to count their joint occurrences in a tree,
we would not want the  two of them to match the very same node at the same  time.
As  we are interested in counting distinct, non-overlapping occurrences of patterns, triangles and ellipses are not used as standalone patterns ($P$), but rather only within composites ($Q$).

The leaf pattern $\Bx$ matches every childless (leaf) node, of which there are four
in the example tree of Figure~\ref{fig}.   
The pattern $\La\Lt\Ra$ matches ``only children'', of which there are two in the figure (\textsf{h}  and \textsf{k}).
The pattern $\La\Lc\,\Bx\Ra$ matches each node whose youngest (rightmost) child is childless;
that also happens twice in the example (at \textsf{g}  and at \textsf{k}).
So, a \emph{pair} of two copies of this pattern shows up exactly once (at the node pair \textsf{g} \& \textsf{k}), covering those two occurrences;
three such, not at all.
The pattern $\La\Dc\Bx\Ra$ matches a node whose youngest child is the only one who is not a parent;
there is one such (\textsf{k}).
The pattern $\La\Lt\Lc\Bx\Lc\Ra$ matches a leaf provided it is not the eldest child;
there are two (\textsf{x}  and \textsf{y}).
The pattern $\La\Lt\Lc\Dt\Lc\Lt\Ra$ would match a nonleaf middle child  were there one.
The pair of patterns  $\La\Bx\Lc\Ra$ and  $\La\Lc\Bx\Ra$, having a leftmost leaf and having a rightmost leaf, occurs twice in the tree: the first pattern at node \textsf{g} on account of its leaf-child \textsf{w} and the second at \textsf{k} with  its childless child \textsf{z}; and reversed, at \textsf{k}-\textsf{z}  and \textsf{g}-\textsf{x}.
Though these two patterns do co-occur at \textsf{g} and also at \textsf{k}, such overlapping occurrences do not count. 

\section{Pattern Enumerations}

We need to know the number of tree nodes that appear within patterns,
which we calculate as follows:
\begin{align*}
v(\Lt) &= 
v(\Dt) = 0 &
v(\Lc) &= 
v(\Dc) = 0\\
v(\Bx) &= 1&
v(\La p_1\cdots p_k\Ra) &= 1+{\mathlarger\Sigma}_j v(p_j)
\,.
\end{align*}

Our primary enumeration result is the following:

\begin{theorem}[Pattern Enumeration]\label{thm:new}
Let $p_1,\ldots,p_q$, $q\geq 1$, be various composite patterns.
The number of non-overlapping occurrences among all $n$-edge ordered trees
of $n_j$ of each of the patterns $p_j$ and of $\ell\geq 0$ leaf patterns $\Bx$
is
\begin{align}
\binom{m}{n_1,\ldots,n_q}
\sum\limits_{k=1}
\frac{1}{k}
\binom{k}{m}
\binom{u-k}{\ell}
\sum\limits_{i=0}^{m+e-k}
\binom{d+i-1}{i}
\binom{e+b-i-1}{b-m+k-1}
\binom{e+a-i}{u-k}
\,, 
\label{main}
\end{align}
where
\begin{itemize}
\item $a$ is the number of light triangles $\Lt$ appearing in them,
\item $b$ is the number of light ellipses $\Lc$, 
\item $c$ is the number of dark triangles $\Dt$,
\item $d$ is the number of dark ellipses $\Dc$, 
\item $m=\Sigma n_j$ is the total number of patterns---presumed to be nonzero,
\item $v=\Sigma_j v(p_j)$ is the total number of nodes in the composite patterns, 
\item $e=n+m-v-c-a$ is the number of edges not accounted for by the patterns,
and
\item $u=n+m-v+1$.
\end{itemize}
\end{theorem}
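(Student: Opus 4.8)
My plan is to obtain~\eqref{main} in two moves: first dispose of the dark ellipses $\Dc$ by rewriting each as an unbounded run of dark triangles $\Dt$, and then appeal to --- or re-prove --- the enumeration of patterns drawn from the smaller alphabet $\Bx,\Lt,\Dt,\Lc$, which is precisely the $d=0$ instance of~\eqref{main} (there $\binom{d+i-1}{i}$ kills all terms but $i=0$) and is the content of~\cite{More}.

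\emph{Eliminating the dark ellipses.} In any one occurrence of a pattern, each of its dark ellipses matches a definite string of $t\ge 0$ nonleaf subtrees, and matching such a string is exactly matching $\Dt^{t}$; since adjacent dark ellipses record their split points, this is a bijection on occurrences. Replacing the $d$ dark ellipses by $\Dt^{t_1},\dots,\Dt^{t_d}$ leaves $n$, $m$, $v$, $\ell$, the count $a$ of light triangles and the count $b$ of light ellipses all unchanged, while the dark-triangle count climbs from $c$ to $c+i$ with $i:=t_1+\dots+t_d$; hence $e=n+m-v-c-a$ drops to $e-i$ and $u=n+m-v+1$ is unaffected. Summing the occurrence count over all $(t_1,\dots,t_d)\in\mathbb Z_{\ge 0}^{d}$ and collecting by the total $i$ --- there being $\binom{d+i-1}{i}$ weak compositions of $i$ into $d$ parts, each contributing the $d=0$ count at dark-triangle parameter $c+i$ --- then interchanging this $i$-sum with the $k$-sum of the $d=0$ formula, produces exactly~\eqref{main}: the factors $\binom{e+b-i-1}{b-m+k-1}$ and $\binom{e+a-i}{u-k}$ are the light-ellipsis and light-triangle factors of the $d=0$ formula evaluated at $e\mapsto e-i$, while $\tfrac1k\binom km$, $\binom{u-k}{\ell}$ and $\binom{m}{n_1,\dots,n_q}$ survive verbatim.

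\emph{The restricted grammar.} For the $d=0$ case I would either quote~\cite{More} or re-run the cycle-lemma argument of~\cite{Pat,More}: encode a tree together with $m$ non-overlapping pattern occurrences and $\ell$ distinguished leaves as a {\L}ukasiewicz-type word by contracting each occurrence to one super-letter and expanding each triangle or ellipsis into the subtree(s) it matches, so that among the $k$ cyclic rotations of such a word of length $k$ --- of which $m$ letters are super-letters --- exactly one encodes a genuine tree; this is the source of the normalising $\tfrac1k\binom km$. The remaining binomials then count the distributions of the $e$ unaccounted-for edges among the light triangles and among the light ellipses and the placements of the $\ell$ leaves, and the multinomial $\binom{m}{n_1,\dots,n_q}$ records which of the $q$ patterns each occurrence realises.

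\emph{Where the work is.} The delicate step is gluing the two moves. The patterns produced by $\Dc\mapsto\Dt^{t}$ form, in general, a different bag from $p_1,\dots,p_q$, with different multiplicities, so one must check that the $d=0$ counts pool correctly --- concretely, that over all bags reachable by inserting a total of $i$ dark triangles the multinomial coefficients sum to $\binom{d+i-1}{i}\binom{m}{n_1,\dots,n_q}$ --- and one must absorb the degenerate case of a dark ellipsis that is the sole content of a top-level composite and matches nothing, which is harmless since $v(\La\,\Ra)=v(\Bx)=1$. (Carried out instead as a single cycle-lemma count over the full five-symbol grammar, the crux becomes the same accounting: dark triangles forced onto nonempty subtrees, dark ellipses onto strings of them, and the rotation count of the cycle lemma, all subject to non-overlap.)
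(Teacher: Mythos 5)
Your route is viable and, mechanically, very close to the paper's own argument: the paper's proof also converts the $d$ dark ellipses into runs of dark triangles (its step 4, which is exactly where the factor $\binom{d+i-1}{i}$ arises), and the remaining steps are the $d=0$ machinery of \cite{Pat,More} (filler patterns $\La\Lt\Lc\Ra$, distribution of edges into light ellipses, leaf placement, Cycle Lemma). Your bookkeeping of the parameter shift is also right: replacing $\Dc$'s by $\Dt^{t}$'s sends $c\mapsto c+i$, hence $e\mapsto e-i$ with $u,v,m,a,b,\ell$ unchanged, and the $d=0$ factors then become precisely the inner binomials of (\ref{alt}), which reverses to (\ref{main}). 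The only structural difference is that you package this as a reduction to the known $d=0$ theorem instead of redoing the grafting count, which is legitimate.

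The genuine gap is in the gluing step that you yourself flag: the pooling identity you propose to verify is false as stated. Take a single pattern $\La\Dc\Dc\Ra$ ($m=1$, $d=2$): for each $i$ exactly one bag is reachable, namely $\{\La\Dt\cdots\Dt\Ra\}$ with $i$ dark triangles, whose multinomial coefficient is $1$, while your right-hand side is $\binom{i+1}{i}\binom{1}{1}=i+1$. The missing factor $i+1$ does not live in the multinomials at all but in the fibres of your collapse map: a single occurrence of $\La\Dt\cdots\Dt\Ra$ pulls back to $i+1$ distinct occurrences of $\La\Dc\Dc\Ra$ (the choice of split point between the two ellipses), so the $d=0$ counts must be weighted by the number of reinterpretations of each expanded occurrence, not merely summed over reachable bags. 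Nor can you instead sum naively over the $\binom{d+i-1}{i}$ compositions with the expanded bag's multinomial: with two copies of $\La\Dc\Ra$ and $i=1$, both compositions describe the same set of configurations of the bag $\{\La\Dt\Ra,\Bx\}$, and that reading double-counts. So neither reading of your ``pooling'' claim is correct, and the theorem does not follow from what is written. The step can be repaired: temporarily distinguish the $m$ pattern copies (multiplying all configuration counts by $\prod_j n_j!$); expansion assignments to the now-distinct $d$ ellipses are then exactly the $\binom{d+i-1}{i}$ weak compositions, each labelled assignment contributes $m!$ times the parameter-only part of the $d=0$ formula regardless of what multiplicities the expanded bag happens to acquire (since $\prod\mu!\,\binom{m}{\mu}=m!$), and dividing back by $\prod_j n_j!$ yields (\ref{alt}). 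With that accounting supplied (and a remark that an expanded pattern degenerating to $\La\Ra$ is still handled correctly by the $d=0$ formula, as you note), your proof closes; without it, the crucial step is an unproved and, as phrased, incorrect claim.
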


\begin{proof}
The $m$ nonleaf patterns leave $n+1-v=u-m$ nodes unaccounted for,
any of which could be a leaf.
The patterns require at least $\ell$ of them to be leaves.
We count separately for
each possible number of ``loose''
(unattached to composite patterns) tree leaves, $k=\ell,\ell+1,\ldots,u-m$.
The number of tree nodes that are accounted for by the original patterns and all these leaves
is $v+k$.

The proof proceeds in several steps:
\begin{enumerate}

\item Arrange the given $m$ nonleaf patterns in a row, in any of
$\binom{m}{n_1,\ldots,n_q}$ ways.

\item\label{2} Intersperse an extra $n+1-v-k=u-m-k$ patterns of the form $\La\Lt\Lc\Ra$
among the $m$ patterns,
to cover all the missing internal (nonleaf) nodes,
in ${\binom{u-k}{m}}$ ways, for a total of $u-k$ patterns.

\item There are $e-(n+1-v-k)=m+e-u+k$ missing edges (of the $e$ missing from
the given patterns; $n+1-v-k$ were just added in the previous step).
Split them into two categories: $i$ edges that may not take leaves and $v+k-n+e-i-1$ that may.
This adds a summation $\Sigma_i$.

\item
Distribute these $i$ edges as sequences $\Dt\cdots\Dt$ of restricted triangles,
in place of the $d$ restricted ellipses $\Dc$.
There are
 ${\binom{d+i-1}{i}}$
ways to do this.

\item\label{5}
Similarly,
distribute the remaining $v+k-n+e-i-1$ edges as sequences $\Lt\cdots\Lt$ of unrestricted triangles,
in place of the $b+n+1-v-k$ light ellipses $\Lc$ ($b$ original and $n+1-v-k$ added).
There are
 ${\binom{e+b-i-1}{n-v+b-k}}$
ways to do this.

\item Place the $k$ leaves in some of the
$e+a-i$ unrestricted, light triangles
($a$ in the original patterns, plus
$n+1-v-k$ from step \ref{2} and $v+k-n+e-i-1$ from step \ref{5}),
in ${\binom{e+a-i}{k}}$ ways.

\item Select $\ell$ leaves to match those in the given patterns in $\binom{k}{\ell}$ ways.

\item The cyclic arrangement of the resultant $m+(u-m-k)=u-k$ 
patterns corresponds to exactly one occurrence of the patterns in a tree.
(This is an application of Dvoretsky and Motzkin's Cycle Lemma~\cite{DM}; see~\cite{CL}.)
To see this, graft the patterns into one tree by repeatedly
picking any pattern in the sequence and
inserting it into the closest (rightmost) available triangle slot among the patterns
preceding it, wrapping back around from the end when necessary.
The $u-k$ patterns contain a total of 
$a+c+(u-m-k)+(m-a-c+k-1)-\ell-(k-\ell)=u-k-1$ slots.
So, in fact, a single tree results from the grafting,
with each pattern occurring at the point it ends up in the reconstructed tree.
Thus, the enumeration has an additional factor of $\frac{1}{u-k}$.
\end{enumerate}

Collecting everything and summing for $k$, we have
\begin{align}\label{alt}
\binom{m}{n_1,\ldots,n_q}
\sum\limits_{k=\ell}^{u-m}
&\frac{1}{u-k}{\binom{u-k}{m}}\binom{k}{\ell}
\sum\limits_{i=0}^{m+e-u+k}\!
{\binom{d+i-1}{i}}{\binom{e+b-i-1}{n-v+b-k}}{\binom{e+a-i}{k}}\,.
\end{align}

Reversing the order of summation for $k$ (swapping $k$ and $u-k$) and
avoiding a 0 denominator, 
gives the stated enumeration (\ref{main}).
The sum for $k$ in (\ref{main}) can run from $\max\{1,m\}$ to $u-\ell$.
\end{proof}

For instance, nodes parenting exactly 2 (childless) leaves 
match the pattern $\La \Dc \Bx \Dc \Bx\Dc \Ra$.
So the number of such among the 14 size-4 trees (having 4 edges and 5 nodes) is obtained by setting $n=4$, $m=1$, $\ell=a=c=b=0$, $d=v=3$,  $e=2$, and $u=3$:
\begin{align*}
&
\binom{1}{1}
\sum\limits_{k=1}
\frac{1}{k}\binom{k}{1}
\binom{3-k}{0}
\sum\limits_{i}^{}\!
\binom{2+i}{i}
\binom{1-i}{k-2}
\binom{2-i}{3-k}\\
{}={}&
\sum\limits_{i=0}^1
\binom{2+i}{i}
\sum\limits_{k=2}^3
\binom{1-i}{k-2}
\binom{2-i}{3-k}
= [2+1]+3[1]=6\,.
\end{align*}
See Figure~\ref{fig:pro}, where the pattern matches one node in each of the second through sixth trees and one in the penultimate tree.

Whenever there can be at most one occurrence of the patterns per tree,
our formula counts trees---rather than mere instances of patterns.
This is the case, in particular, when the patterns cover each of the $n+1$ nodes and when those patterns are \emph{unambiguous}, in the sense that only one of the patterns
can match at any one of the nodes.

Call any internal node \emph{protected}~\cite{protected} when it is a grandparent via each and every one of its children,
and \emph{unprotected} when at least one child is childless.
This notion of protection was recently extended to $k$-protection, that is, that no path from the node contains fewer than $k$ edges,  in \cite{kpro}.  For the time  being, $k=2$.

The unambiguous pattern $\La\Dc\Bx\Lc\Ra$ matches each unprotected internal node 
in a unique manner, with the leftmost leaf child singled out.
Another way of looking at this pattern is as counting the eldest among leaf siblings.
There are 24 such in Figure~\ref{fig:pro}; the other 11 leaves in the figure have childless elder siblings.
On the other hand, $\La\Lc\Bx\Lc\Ra$, though it also matches unprotected nodes, it does so as many times as a node has childless children (viz.\@ 35 times in Figure~\ref{fig:pro}).
So it counts leaf children, rather than counting nodes having  leaf children.

\begin{figure}[t]
\centering
\psset{treefit=loose,levelsep=5mm,treesep=2mm,radius=1.0mm,nodesep=-1pt,dotsize=2mm,treenodesize=2mm}
\begin{pspicture}(-3.,-1.5)(5,0.5)
\pstree{\TC}{\Lf\Lf\Lf\Lf}
\pstree{\TC}{\pstree{\TC}{\Lf}\Lf\Lf}
\pstree{\TC}{\Lf\pstree{\TC}{\Lf}\Lf}
\pstree{\TC}{\Lf\Lf\pstree{\TC}{\Lf}}
\pstree{\TC}{\pstree{\TC}{\Lf\Lf}\Lf}
\pstree{\TC}{\Lf\pstree{\TC}{\Lf\Lf}}
\end{pspicture}\\
\psset{treesep=4mm}
\begin{pspicture}(-3.,-2)(5,0.5)
\pstree{\Tdot}{\pstree{\TC}{\Lf}\pstree{\TC}{\Lf}}
\pstree{\TC}{\pstree{\Tdot}{\pstree{\TC}{\Lf}}\Lf}
\pstree{\TC}{\Lf\pstree{\Tdot}{\pstree{\TC}{\Lf}}}
\pstree{\Tdot}{\pstree{\TC}{\Lf\Lf\Lf}}
\pstree{\Tdot}{\pstree{\TC}{\pstree{\TC}{\Lf}\Lf}}
\pstree{\Tdot}{\pstree{\TC}{\Lf\pstree{\TC}{\Lf}}}
\pstree{\Tdot}{\pstree{\Tdot}{\pstree{\TC}{\Lf\Lf}}}
\pstree{\Tdot}{\pstree{\Tdot}{\pstree{\Tdot}{\pstree{\TC}{\Lf}}}}
\end{pspicture}
\caption{The $C_4=14$ four-edge trees with 11 protected nodes in black.}\label{fig:pro}
\end{figure}
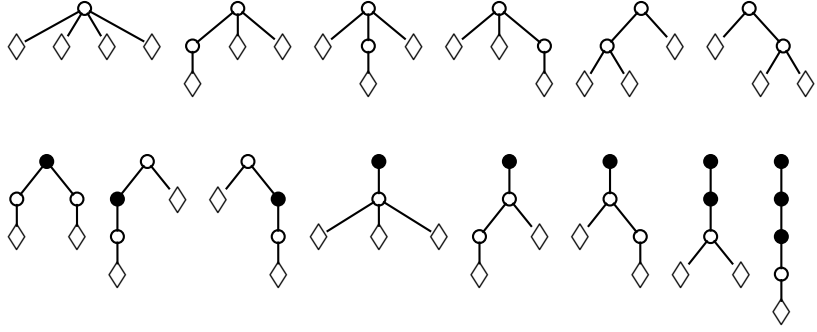

Whenever there are no dark ellipses ($d=0$) in the patterns,
$i$ only takes 0 in formula (\ref{alt}), 
for otherwise $\binom{d+i-1}{i}=0$.
For this to be possible, we also need for
$k\geq u-m-e$, or else the sum over $i$ is empty.
Accordingly, formula (\ref{alt}) simplifies substantially:
\begin{align}\nonumber
&
\binom{m}{n_1,\ldots,n_q}
\sum\limits_{k=u-m-e}^{u-m}
\frac{1}{u-k}
\binom{u-k}{m}
\binom{k}{\ell}
\binom{e+b-1}{n-v+b-k}
\binom{e+a}{k}
\\
{}={} &
\binom{m}{n_1,\ldots,n_q}
\binom{e+a}{\ell}
\sum\limits_{k=u-m-e}^{u-m}
\frac{1}{u-k}
\binom{u-k}{m}
\binom{e+a-\ell}{k-\ell}
\binom{e+b-1}{n-v+b-k}
\,,\label{dz}
\end{align}
with $e=n+m-v-c-a$ and  $u=n+m-v+1$.
This matches the main result of \cite[Thm.~4.1]{More} with various notational changes.
When, in addition, there are no dark triangles ($c=0$), this reduces
to a much simpler formula, as given in \cite[Thm.~2.1]{Pat}, namely,
\begin{align*}
\frac{1}{n-v-\ell+1}\binom{n+m-v}{n_1,\ldots,n_q,\ell,n-v-\ell}\binom{2n+m-2v-\ell-a+b}{n-v-\ell+b}
\,.
\end{align*}

The statement of the main theorem above assumed that $m>0$,
but the formul\ae\@ still make sense for $m=0$, as long as we take care to avoid a 0 denominator.
When there are no composite patterns and, hence, $m=0$, 
only the number of leaf patterns $\ell\geq 0$ is given.
So, all that is being counted is the number of occurrences of $\ell$ leaves within trees comprising $n$ edges.
A tree with $k\geq\ell$ leaves has $\binom{k}{\ell}$ such instances.
With no composite patterns, all of $a,b,c,d,v$ must be zero, whereas $e=n$ and $u=n+1$.
So formula (\ref{dz}) becomes 
\begin{align}\nonumber
\binom{n}{\ell}
\sum\limits_{k=\ell}^{n}
\frac{1}{n+1-k}
\binom{n-\ell}{k-\ell}
\binom{n-1}{n-k}
&=
\frac{1}{n}
\binom{n}{\ell}
\sum\limits_{k}^{}
\binom{n-\ell}{k}
\binom{n}{n-\ell+1-k}
\\\label{leaves}
&=
\frac{1}{n}
\binom{n}{\ell}
\binom{2n-\ell}{n-1}
\,.
\end{align}
For example, for $n=4$ and $\ell=3$, there are 6 trees with exactly three leaves
plus one tree with four leaves, leaving 4 ways of choosing just three of them, for a total of
$
\frac{1}{4}
\binom{4}{3}
\binom{8-3}{4-1}
=
10
$
occurrences of a triad of leaves.
See Figure~\ref{fig:pro}.

At the opposite end of the spectrum,
when 
all edges are accounted for by the patterns, that is, 
when $e=0$, there is nothing left to place in any ellipsis,
so we can ignore them for all intents and purposes,
imagining $b=d=0$.
We have $u=a+c+1$ and $v=n+m-c-a$.
Looking at (\ref{dz}), we obtain
\begin{align}\nonumber
&
\binom{m}{n_1,\ldots,n_q}
\binom{a}{\ell}
\sum\limits_{k=a+c+1-m}^{a+c+1-m}
\hspace{-10pt}\frac{1}{a+c+1-k}
\binom{a+c+1-k}{m}
\binom{a-\ell}{k-\ell}
\binom{-1}{c+a-m-k}
\,,
\end{align}
which simplifies to
\begin{align}
\frac{1}{m}
\binom{m}{n_1,\ldots,n_q}
\binom{a}{\ell}
\binom{a-\ell}{m-c-1}
\,.
\end{align}

For example, (full) binary trees with $2r$ edges have $r$ binary nodes
$\La\Lt\Lt\Ra$ and $r+1$ leaves.
Substituting $m=n_1=r$, $a=2r$, $c=0$, and $\ell=r+1$, we get
\[
\frac{1}{r}
\binom{r}{r}
\binom{2r}{r+1}
\binom{r-1}{r-1}
=
\frac{1}{r+1}
\binom{2r}{r}
= C_r
\,,
\]
the Catalan numbers, as expected.
Alternatively,
letting $\ell=0$, so the leaves are not specified, we  get again
\[
\frac{1}{r}
\binom{r}{r}
\binom{2r}{0}
\binom{2r}{r-1}
=
\frac{1}{r+1}
\binom{2r}{r}
= C_r
\,.
\]

If we want  to count binary trees in which none of $2r+1$ binary nodes has only one leaf child, then we need exactly
$r$ binary nodes of the form
$\La\Dt\Dt\Ra$ and $r+1$  of the form
$\La\Bx\Bx\Ra$,
giving
($n_1=r$, $n_2=r+1$,
$m=2r+1$, $\ell=a=0$, and $c=2r$) once again
\[
\frac{1}{2r+1}
\binom{2r+1}{r}
=
\frac{1}{r+1}
\binom{2r}{r}
= C_r
\,,
\]
the Catalan number that counts the number of binary trees with $r$ internal nodes of either kind.

Likewise, when all nodes are accounted for, that is, when $v+\ell=n+1$, then $k=u-m=\ell$ in formula (\ref{alt}), which  reduces to
\begin{align}\label{nodes}
\frac{1}{m}
\binom{m}{n_1,\ldots,n_q}
\sum\limits_{i=0}^{e-\ell}
\binom{d+i-1}{i}
\binom{e+b-i-1}{b-1}
\binom{e+a-i}{\ell}
\,,
\end{align}
where $e=m+\ell-a-c-1$.
For example, if a tree has exactly $\ell$ leaves and, hence, $n+1-\ell$ internal (nonleaf) nodes $\La\Lt\Lc\Ra$, then setting $m=a=b=n+1-\ell$ and $c=d=0$, we get (after simplification)
\begin{equation}\label{Narayana}
\frac{1}{n+1}\binom{n+1}{\ell}\binom{n-1}{\ell-1}\,
\end{equation}
since, again, $i$ can only be 0.
This is the Narayana distribution~\cite{Narayana} (\seqnum{A001263}),
which counts $\ell$-leaf ordered trees~\cite{leaf}.

When there is a single composite pattern and $\ell$ additional leaf patterns, we have $m=n_1=q=1$. The main formula (\ref{main}) becomes
\begin{align}\nonumber
&
\sum\limits_{k=1}
\binom{u-k}{\ell}
\sum\limits_{i=0}^{e-k+1}
\binom{d+i-1}{i}
\binom{e+b-i-1}{b+k-2}
\binom{e+a-i}{u-k}
\\{}={} &\nonumber
\sum\limits_{i=0}
\binom{d+i-1}{i}
\sum\limits_{k=1} 
\binom{e+b-i-1}{b+k-2}
\binom{e+a-i}{u-k}
\binom{u-k}{\ell}
\\{}={} &\nonumber
\sum\limits_{i=0}^{w-a}
\binom{d+i-1}{i}
\binom{w-i}{\ell}
\sum\limits_{k=0}^{w+c-\ell}  
\binom{w-a+b-i-1}{b+k-1}
\binom{w-\ell-i}{w+c-\ell-k}
\,,
\end{align}
where $w=e+a=n-v-c+1$.
Simplifying with Vandermonde's convolution, we derive the following:

\begin{corollary}[Single Pattern]\label{lem:one}
The number of occurrences of a single composite pattern containing $v$ nodes, 
$a$ light triangles, $b$ light ellipses, $c$ dark triangles, and $d$ dark ellipses, along with $\ell$ leaves, among the ordered trees with $n$ edges is
\begin{align}
\sum\limits_{i=0}^{w-a}
\binom{d+i-1}{i}
\binom{w-i}{\ell}
\binom{2w-a+b-\ell-2i-1}{w+b+c-\ell-1}
\,,
\label{one}
\end{align}
where $w=n-v-c+1$.
\end{corollary}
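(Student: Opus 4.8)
The plan is to read this off as the $m=1$ specialization of Theorem~\ref{thm:new}, followed by a single application of Vandermonde's convolution. A single composite pattern forces $q=1$ and $n_1=m=1$, so the multinomial coefficient $\binom{m}{n_1,\ldots,n_q}$ equals $1$ and the factor $\frac1k\binom km$ in formula (\ref{main}) becomes $\frac1k\binom k1=1$, removing the cyclic-counting division without constraining the range of $k$. First I would substitute $m=1$ into (\ref{main}), collect the surviving binomials, interchange the two summations so that the dark-ellipsis index $i$ is outermost, and --- after a reindexing of $k$ together with the subset-of-a-subset identity $\binom NK\binom K\ell=\binom N\ell\binom{N-\ell}{K-\ell}$ to peel off the $\ell$ specified leaves --- rewrite everything in terms of $w=e+a=n-v-c+1$ (so that $e=w-a$ and $u=w+c+1$). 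This is exactly the bookkeeping that produces the triple sum displayed immediately above the corollary, namely $\sum_{i=0}^{w-a}\binom{d+i-1}{i}\binom{w-i}{\ell}\sum_{k}\binom{w-a+b-i-1}{b+k-1}\binom{w-\ell-i}{w+c-\ell-k}$.

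The remaining work is to evaluate the inner sum over $k$ in closed form. Reindexing by $j=b+k-1$ recasts $\sum_k\binom{w-a+b-i-1}{b+k-1}\binom{w-\ell-i}{w+c-\ell-k}$ as $\sum_j\binom{w-a+b-i-1}{j}\binom{w-\ell-i}{(w+b+c-\ell-1)-j}$, which is a Vandermonde convolution equal to $\binom{(w-a+b-i-1)+(w-\ell-i)}{w+b+c-\ell-1}=\binom{2w-a+b-\ell-2i-1}{w+b+c-\ell-1}$. Substituting this back into the $i$-sum gives exactly formula (\ref{one}), with the range $0\le i\le w-a$ the one already carried along in the preceding display (which itself descends from the nonnegativity constraints in steps~\ref{2} and~\ref{5} of the proof of Theorem~\ref{thm:new} after $m=1$ is imposed).

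The one point demanding care is the passage from the finitely bounded sum over $k$ inherited from (\ref{main}) to the clean form $\sum_j\binom Rj\binom S{P-j}=\binom{R+S}{P}$ of Vandermonde's identity, which is valid for $j$ ranging over all integers under the convention that out-of-range binomials vanish. So the task is to check that extending the summation to all integers introduces no nonzero terms --- i.e.\ that whenever $k$ falls below or above its honest bounds, at least one of $\binom{w-a+b-i-1}{b+k-1}$ and $\binom{w-\ell-i}{w+c-\ell-k}$ vanishes for every $i$ with $0\le i\le w-a$ --- and, conversely, that the honest bounds already include every nonzero summand. This is a short interval comparison that relies only on $c\ge0$ and $i\ge0$, and it must also accommodate the degenerate cases $d=0$ (where $\binom{d+i-1}{i}$ pins $i$ to $0$) and $b=0$ (where the reindexing $j=b+k-1$ starts at a negative value); in each the vanishing-binomial convention does the right thing. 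With that verification in hand, collecting terms yields (\ref{one}) and hence the corollary.
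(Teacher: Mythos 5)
Your proposal is correct and follows essentially the same route as the paper: the corollary is obtained there too by setting $m=n_1=q=1$ in (\ref{main}), interchanging the $i$- and $k$-sums, peeling off the $\ell$ leaves to reach the displayed triple sum in terms of $w=e+a$, and collapsing the inner $k$-sum by Vandermonde's convolution. Your extra attention to the summation bounds and the degenerate cases $d=0$, $b=0$ is a point the paper glosses over (it simply relies on vanishing out-of-range binomials), but it does not change the argument.
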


For example, the lone pattern $\La\Dt\Dc\Lc\Lt\Ra$,
with $v=a=b=c=d=1$, occurs twice in the trees with at least three leaves ($\ell=3$) of Figure~\ref{fig:pro} ($n=4$), specifically at the roots of the second and fifth trees.

When there are no leaf patterns ($\ell=0$), this formula becomes
\begin{align}
\sum\limits_{i=0}^{n-v-c-a+1}
\binom{d+i-1}{i}
\binom{2n-2v-a+b-2c-2i+1}{n-v+b}
\,.
\label{leafless}
\end{align}
There are a total of 5  instances of $\La\Dt\Dc\Lc\Lt\Ra$ among all the trees in the figure.

On the other hand,
when there are no dark ellipses in the pattern ($d=0$),
we have instead   
\begin{align}
\binom{n-v-c+1}{\ell}
\binom{2(n-v-c)-a+b-\ell+1}{n-v+b-\ell}
\,.
\nonumber
\end{align}

For example,
there are 2 instances of $\La\Dt\Lc\Lt\Ra$ ($a=b=c=v=1$) along with three leaves ($\ell=3$)
in the figure ($n=4$).
In fact, in this situation, where there is only one composite pattern besides the leaves and hence no fear of overlapping patterns, there is no need for dark triangles at all: each $\Dt$ in a pattern may be replaced by $\La\Lt\Lc\Ra$, adding $c$ to each of $a$, $b$, and $v$, and setting $c=0$.

When there are neither dark ellipses nor leaf patterns,
we are left  with just
\begin{align}
\binom{2(n-v-c)-a+b+1}{n-v+b}
\,,
\nonumber
\end{align}
which conforms with~\cite[\S 3.2]{Pat} for the $c=0$ case.
Figure~\ref{fig:pro} has $5$ occurrences of $\La\Dt\Lc\Lt\Ra$ all told.

\section{Protected Nodes and Fine Trees}\label{sec:pro}

As  mentioned, an internal (nonleaf) node is deemed {protected} when all its children have offspring~\cite{protected}.
In the size 4 case, as can be seen in Figure~\ref{fig:pro},
there are 6 trees with no protected nodes, 6 with one, and 1 each with two and three.
There are 10 trees with exactly two unprotected nodes, and there
are 4 with only one, including 1 with one of each.
Like sequence~\seqnum{A143362}, but
unlike the enumeration in~\cite{Fine}, roots are  included here in the node count.

Suppose we wish to count the number of trees with $n$ edges, $j\geq 1$ leaves, and no protected nodes at all.
Referring to Figure~\ref{fig:pro}, there are 5 such for $n=4$, $j=3$.
We simply need $m=n+1-j$ patterns $\La\Dc\Bx\Lc\Ra$ for all the unprotected internal nodes, plus $\ell=2j-n-1$ additional leaf patterns $\Bx$ for any remaining leaves.
Letting 
$d=b=m=n+1-j$, $a=c=0$,  $v=2m$, $e=j-1$, and $u=j$ in (\ref{main}),
we get 
\begin{align*}
&
\sum\limits_{k=1}^{j}
\frac{1}{k}
\binom{k}{n-j+1}
\binom{j-k}{2j-n-1}
\sum\limits_{i=0}^{j-1}
\binom{n-j+i}{n-j}
\binom{n-i-1}{k-1}
\binom{j-i-1}{j-k}
\,.
\end{align*}
For the first binomial to be nonzero, we need $n-j+1\leq k$ and for the second, $2j-n-1\leq j-k$.
So $k$ can only take $n-j+1$.
Since all nodes are accounted for, this counts trees.
Summing for all $j$ (starting from $n$ and going down)
and simplifying, we obtain a closed form for sequence \seqnum{A143363}:
\begin{proposition}
The number of ordered trees with $n$ edges and no protected nodes is given by
\begin{align}\label{nopro}
&\sum_{j=0}^{n}
\frac{1}{j+1}
\sum\limits_{i=0}^{n-j-1}
\binom{j+i}{j}
\binom{n-i-1}{j}
\binom{n-j-i-1}{n-2j-1}
\,.
\end{align}
\end{proposition}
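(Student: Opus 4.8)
The plan is to instantiate the Pattern Enumeration theorem (Theorem~\ref{thm:new}) for the specific family of patterns that characterizes trees with no protected nodes, carry out the substitutions indicated in the paragraph preceding the statement, and then sum over the number of leaves. Concretely, I would count trees with $n$ edges and exactly $j \geq 1$ leaves by using $m = n+1-j$ copies of the unambiguous pattern $\La\Dc\Bx\Lc\Ra$ (one for each unprotected internal node---and in a protected-free tree every internal node is unprotected) together with $\ell = 2j-n-1$ bare leaf patterns $\Bx$ for the leaves that are not the singled-out leftmost leaf child of their parent. I would verify the bookkeeping: each pattern $\La\Dc\Bx\Lc\Ra$ contributes $v(\La\Dc\Bx\Lc\Ra) = 2$ nodes, one $\Dc$, and one $\Lc$, so $v = 2m$, $d = b = m = n+1-j$, $a = c = 0$, whence $e = n + m - v - c - a = n - m = j-1$ and $u = n+m-v+1 = n-m+1 = j$. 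Since $v + \ell = 2m + (2j-n-1) = n+1$, all nodes are accounted for, so the formula counts trees rather than instances, and the patterns are unambiguous so there is at most one occurrence per tree.

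Next I would substitute these values into the main formula (\ref{main}) and observe, exactly as the excerpt does, that the two outer binomials $\binom{k}{n-j+1}$ and $\binom{j-k}{2j-n-1}$ force $k \geq n-j+1$ and $k \leq n-j+1$ simultaneously, so $k = n-j+1 = m$ is the only surviving term. Plugging $k = n-j+1$ collapses $\frac{1}{k}\binom{k}{m} = \frac{1}{m}\binom{m}{m} = \frac{1}{n-j+1}$ and $\binom{u-k}{\ell} = \binom{j - (n-j+1)}{2j-n-1} = \binom{2j-n-1}{2j-n-1} = 1$, leaving the single inner sum $\sum_{i=0}^{j-1}\binom{n-j+i}{n-j}\binom{n-i-1}{k-1}\binom{j-i-1}{j-k}$ displayed in the excerpt, with $k-1 = n-j$ and $j-k = 2j-n-1$. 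This is the per-$j$ count of protected-free $n$-edge trees with exactly $j$ leaves.

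Finally I would sum this expression over all admissible $j$. The constraint $\ell = 2j-n-1 \geq 0$ gives $j \geq (n+1)/2$, and $j \leq n$; but the binomial $\binom{j-i-1}{2j-n-1}$ vanishes outside that range, so I may let $j$ run over $0,\ldots,n$ freely. To reach the stated closed form (\ref{nopro}), I would reindex by $j \mapsto n-j$ (replacing the leaf count by the internal-node count $n+1-j$, which is the more natural parameter here): under $j' = n-j$ one has $\frac{1}{n-j+1}$ become... wait, more carefully, with the substitution the outer prefactor $\frac{1}{n-j+1}$ becomes $\frac{1}{j'+1}$ where $j' = n-j$; the three inner binomials $\binom{n-j+i}{n-j}$, $\binom{n-i-1}{n-j}$, $\binom{j-i-1}{2j-n-1}$ become $\binom{j'+i}{j'}$, $\binom{n-i-1}{j'}$, $\binom{n-j'-i-1}{n-2j'-1}$ respectively, and the range $i = 0,\ldots,j-1$ becomes $i = 0,\ldots,n-j'-1$, matching (\ref{nopro}) verbatim after dropping the prime. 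I would then note this equals the OEIS entry \seqnum{A143363} by checking initial values ($n = 0,1,2,3,4$ giving $1,1,1,2,6$, the last matching the six protected-free trees visible in Figure~\ref{fig:pro}).

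The only real obstacle is making sure the degenerate cases are handled cleanly: the ``avoid a 0 denominator'' caveat attached to Theorem~\ref{thm:new} matters here because when $j = n$ (a tree that is a single node with $n$ leaf children, so $m = 1$) one must use the $k = m$ branch correctly, and when $n = 0$ (the one-node tree, vacuously protected-free) the formula must still return $1$; I expect the reindexed sum (\ref{nopro}) to give these automatically since for $n=0$ only $j=0,i=0$ survives, yielding $\frac{1}{1}\binom{0}{0}\binom{-1}{0}\binom{-1}{-1} = 1$, but I would double-check the convention $\binom{-1}{-1} = 1$ (or simply note the $j'=0$ term degenerates correctly). Beyond that, every step is a direct specialization of an already-proved theorem plus elementary binomial reindexing, so no deep new idea is needed.
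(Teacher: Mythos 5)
Your proposal is essentially the paper's own proof: the same choice of $m=n+1-j$ unambiguous patterns $\La\Dc\Bx\Lc\Ra$ plus $\ell=2j-n-1$ leaf patterns, the same specialization of (\ref{main}) forcing $k=n-j+1$, and the same summation over $j$ with the reindexing $j\mapsto n-j$ to reach (\ref{nopro}). Only your peripheral sanity checks are slightly off (the $n=3$ count is $3$, not $2$, and at $n=0$ the $j=0$ inner sum in (\ref{nopro}) is empty rather than equal to $1$), but these do not affect the argument.
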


More generally, suppose we wish to count the number of trees with $n$ edges, $r$ protected nodes, and $s$ unprotected.
For this, we need $r$ patterns $\La\Dt\Dc\Ra$ 
for (nonleaf) protected nodes, 
$s$ patterns $\La\Dc\Bx\Lc\Ra$ 
to account for the unprotected nodes, and $\ell=n+1-r-2s$ pure-leaf patterns.
Letting  
$m=d=r+s$, $a=0$, $b=s$, $c=r$, 
and $e=n-r-s$ in formula (\ref{nodes}), 
and summing for all $j=r+s$,
yields
a closed form for sequence~\seqnum{A143362}:

\begin{proposition}
The number of ordered trees with $n$ edges and $r$ protected nodes is given by
\begin{align}\label{nr}
\sum_{j=r+1}^{n}
\frac{1}{j}
\binom{j}{r}
\sum\limits_{i=j}^{2j-r-1}
\binom{i-1}{j-1}
\binom{n-r-i+j-1}{j-r-1}
\binom{n-i}{n+r-2j+1}
\,.
\end{align}
\end{proposition}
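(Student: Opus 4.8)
The plan is to derive~(\ref{nr}) as a specialization of Theorem~\ref{thm:new}, using a bag of patterns that covers all $n+1$ nodes of every tree with a prescribed number $r$ of protected nodes and $s$ of unprotected ones, unambiguously, so that the master formula counts trees rather than instances. For a protected internal node---one each of whose children is itself a parent---I would use the composite $\La\Dt\Dc\Ra$, which matches exactly a node with an eldest nonleaf child and arbitrarily many further nonleaf children, in the sole way of putting $\Dt$ on the eldest child. For an unprotected internal node I would use $\La\Dc\Bx\Lc\Ra$, which matches any node possessing a leaf child, uniquely by singling out its leftmost leaf child (the preceding siblings being forced nonleaf, the rest unconstrained). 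The remaining $n+1-r-2s$ nodes are then necessarily leaves, taken up by $\ell=n+1-r-2s$ plain leaf patterns $\Bx$.

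First I would verify that this bag is genuinely unambiguous and non-overlapping: every internal node is matched by precisely one of the two composites, in precisely one way; the $s$ leaves they consume are exactly the leftmost leaf children of the unprotected nodes (distinct, since a leaf has a unique parent); and the $\ell$ surviving leaves receive the leaf patterns in the only possible assignment. Thus an $n$-edge tree with $r$ protected and $s$ unprotected nodes admits exactly one non-overlapping occurrence of the bag, and any other tree none (it has the wrong number of free leaves); by the observation that the formula counts trees whenever at most one occurrence is possible, the relevant instance of Theorem~\ref{thm:new} counts precisely the $n$-edge trees with $r$ protected and $s$ unprotected nodes. Reading off parameters, $m=d=r+s$, $a=0$, $b=s$, $c=r$, $v=r+2s$, so $e=n-r-s$ and $u=n+1-s$; since $v+\ell=n+1$ every node is accounted for, and the reduced formula~(\ref{nodes}) applies.

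Then I would substitute into~(\ref{nodes}) to get, for each $s$, the count
\[
\frac{1}{r+s}\binom{r+s}{r}\sum_{i=0}^{s-1}\binom{r+s+i-1}{i}\binom{n-r-i-1}{s-1}\binom{n-r-s-i}{n+1-r-2s}\,,
\]
and sum over $s\ge 1$ (for $n\ge 1$ every tree with internal nodes has an unprotected one, namely a deepest internal node). Putting $j=r+s$ for the total number of internal nodes and shifting the inner summation index by $j$ converts the double sum into the stated form~(\ref{nr}); the lower limit $j=r+1$ comes from dropping the empty---and, when $r=0$, ill-defined---$s=0$ term, and the upper limit $n$ is harmless because the trailing binomial already vanishes once $j$ exceeds $\lfloor(n+r+1)/2\rfloor$.

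The hard part will not be the algebra---the index shift and the appeal to~(\ref{nodes}) are routine---but the combinatorial hygiene: one must make sure that the two composite patterns really do partition the internal nodes, that the leaf singled out inside an unprotected-node match cannot simultaneously be singled out by another such match, and that the plain leaf patterns fill the remaining leaves uniquely, so that the quantity obtained is a count of trees and not merely of pattern occurrences.
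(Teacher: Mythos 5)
Your proposal matches the paper's own derivation: it uses the same bag of patterns ($r$ copies of $\La\Dt\Dc\Ra$, $s$ copies of $\La\Dc\Bx\Lc\Ra$, and $\ell=n+1-r-2s$ leaf patterns $\Bx$), the same parameter values $m=d=r+s$, $a=0$, $b=s$, $c=r$, $e=n-r-s$ substituted into formula~(\ref{nodes}), and the same summation over $j=r+s$ with the index shift giving~(\ref{nr}). Your added checks of unambiguity and full node coverage are correct and simply make explicit what the paper leaves implicit.
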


Exchanging the r\^oles of protected and unprotected, we get an analogous enumeration by unprotected nodes:

\begin{proposition}
The number of ordered trees with $n$ edges and $s$ unprotected nodes is given by
\begin{align*}  
\sum_{j=s}^{n}
\frac{1}{j}
\binom{j}{s}
\sum\limits_{i=j}^{s+j-1}
\binom{i-1}{j-1}
\binom{n+s-i-1}{s-1}
\binom{n-i}{n-s-j+1}
\,.
\end{align*}
\end{proposition}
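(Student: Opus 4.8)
The plan is to mirror the derivation of formula~(\ref{nr}), interchanging the roles of the two kinds of internal node: now the number $s$ of unprotected nodes is held fixed while the number of protected nodes is summed over. To count an $n$-edge tree with exactly $r$ protected and $s$ unprotected nodes I would use the same bag of patterns as in the paragraph preceding~(\ref{nr}): $r$ copies of $\La\Dt\Dc\Ra$, $s$ copies of $\La\Dc\Bx\Lc\Ra$, and $\ell=n+1-r-2s$ bare leaf patterns $\Bx$. The first task is to verify that this bag is unambiguous and admits a unique, all-nodes occurrence in each such tree, so that Theorem~\ref{thm:new} counts trees and not mere instances: $\La\Dt\Dc\Ra$ matches a node exactly when it has at least one child and every child is a parent --- i.e.\ exactly at the protected nodes --- and does so in only one way (its lone dark triangle must land on the first child and the dark ellipsis on the rest); $\La\Dc\Bx\Lc\Ra$ matches exactly the unprotected nodes, its $\Bx$ being forced onto the first leaf child (every earlier child must be a parent); and $\Bx$ matches exactly the leaves. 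A tree with $j:=r+s$ internal nodes has $n+1-j$ leaves, of which at least $s$ are leaf children of distinct unprotected nodes, so $\ell\ge 0$ and $v+\ell=n+1$; hence any non-overlapping occurrence of the bag must use all $n+1$ nodes, the composites get pinned (by type) to the internal nodes and the $\Bx$'s to the leaves, and the occurrence --- if it exists --- is unique, forcing the tree to have precisely $r$ protected and $s$ unprotected nodes.

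I would then read off the parameters needed for the main theorem. From $v(\La\Dt\Dc\Ra)=1$ and $v(\La\Dc\Bx\Lc\Ra)=2$ we get $v=r+2s$, and $a=0$, $b=s$, $c=r$, $d=m=r+s$, $\ell=n+1-r-2s$, so $e=n-r-s$ and indeed $v+\ell=n+1$. Since every node is accounted for, the applicable specialization is~(\ref{nodes}); with $j:=r+s$, $\binom{m}{n_1,n_2}=\binom{j}{r,s}=\binom{j}{s}$, and $e-\ell=s-1$, it evaluates to
\[
\frac{1}{j}\binom{j}{s}\sum_{i=0}^{s-1}\binom{j+i-1}{i}\binom{n-j+s-i-1}{s-1}\binom{n-j-i}{n+1-j-s},
\]
the number of $n$-edge trees with exactly $r$ protected and $s$ unprotected nodes.

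Finally I would sum over all admissible $r\ge 0$, i.e.\ over $j=s,s+1,\dots$; the summand vanishes once $j>n+1-s$ (the last binomial then has a negative lower entry), so extending the range to $j=n$ is harmless. Shifting the inner index by $j$ and using $\binom{j+i-1}{i}=\binom{j+i-1}{j-1}$ brings the expression into the advertised form, exactly as in the passage that established~(\ref{nr}).

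The one genuinely delicate point --- the ``hard part,'' such as it is --- is the unambiguity verification of the first paragraph: I must be certain each pattern matches only the intended type of node and in exactly one way, since otherwise the formula would count pattern instances rather than trees. Everything after that repeats the bookkeeping already carried out for protected nodes, so I would anticipate no further obstacles; as a sanity check I would confirm the formula against the $n=4$ data of Figure~\ref{fig:pro} (four trees with a single unprotected node, ten with two).
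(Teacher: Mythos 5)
Your proposal is correct and takes essentially the same route as the paper: the paper obtains this result from the all-nodes specialization (\ref{nodes}) with the bag of $r$ patterns $\La\Dt\Dc\Ra$, $s$ patterns $\La\Dc\Bx\Lc\Ra$, and $\ell=n+1-r-2s$ leaf patterns, and then simply ``exchanges the r\^oles'' of protected and unprotected, i.e., fixes $s$ and sums over $j=r+s$, which is exactly your computation. Your explicit unambiguity check and the index shift $i\mapsto i+j$ merely spell out steps the paper leaves implicit.
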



There are 11 protected nodes in the trees displayed in Figure~\ref{fig:pro}, of which
6 are roots and 5 are not.
Call a tree \emph{protected} when its root is.
There are, then, 6 protected 4-edge trees.

The pattern $\La \Dt \Dc \Ra$ matches protected nodes;
hence, occurrences of $\La \Lc \La \Dt \Dc \Ra \Lc \Ra$ correspond to protected nonroots.
The difference between occurrences of those two patterns counts protected roots (a trick used in~\cite[\textsection 3.3]{Pat}).
Plugging in the appropriate values in (\ref{leafless})
($d=c=v=1$, $a=b=0$, $w=n-v-c+1=n-1$ for the first; $d=c=1$, $a=0$, $b=v=2$, $w=n-2$ for the second) 
gives 
\begin{align}\nonumber
&
\sum\limits_{i=0}
\binom{i}{i}
\binom{2n-2i-3}{n-1}
-
\sum\limits_{i=0}
\binom{i}{i}
\binom{2n-2i-3}{n}
\\{}={}&\label{eq:fine}
\sum\limits_{i=1}
\left[
\binom{2n-2i-1}{n-1}
-
\binom{2n-2i-1}{n}
\right]
\,.
\end{align}
This sum of ballot numbers gives the Fine numbers~\cite{Fine}, listed at \seqnum{A000957}:

\begin{proposition}[{\cite[\S1(F)]{survey}}]\label{thm:fine}
The Fine numbers 
\begin{equation}\label{fine}
\sum_{i=1}^{n-1} \frac{i}{n-i}\binom{2n-2i}{n}
\end{equation}
count protected trees with $n$ edges.
\end{proposition}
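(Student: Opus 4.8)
The plan is to connect the two formula expressions obtained for ``protected trees'' to the standard closed form for the Fine numbers, and the whole argument breaks into three pieces: justifying the pattern translation, evaluating formula~(\ref{leafless}) in both instances, and identifying the resulting difference of ballot numbers with~(\ref{fine}). The first piece is already sketched in the paragraphs preceding the statement: $\La\Dt\Dc\Ra$ is an unambiguous pattern matching exactly the protected (nonleaf) nodes, since it requires a first nonleaf child followed by arbitrarily many further nonleaf children and nothing else; wrapping it in $\La\Lc\,\cdot\,\Lc\Ra$ forces the matched node to be a proper child of some other node, hence a protected nonroot; and because each of these patterns matches at most once per node, formula~(\ref{leafless}) genuinely counts trees (with a weight of~$0$ or~$1$), so the difference of the two counts is the number of trees whose root is protected but which have no ``extra'' protected nonroot subtracted off---that is, exactly the protected trees, each counted once. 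I would spell this subtraction trick out carefully, as it is the conceptual crux even though it is not the computational one.

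Next I would carry out the substitutions into~(\ref{leafless}). For $\La\Dt\Dc\Ra$ we have $v=1$ (one node), $a=b=0$ (no light triangles or ellipses), $c=d=1$ (one dark triangle, one dark ellipsis), so $w=n-v-c+1=n-1$ and $\binom{d+i-1}{i}=\binom{i}{i}=1$, giving $\sum_i\binom{2n-2i-3}{n-1}$; for $\La\Lc\La\Dt\Dc\Ra\Lc\Ra$ we have $v=2$, $a=0$, $b=2$, $c=d=1$, so $w=n-2$ and the sum is $\sum_i\binom{2n-2i-3}{n}$. Subtracting and re-indexing $i\mapsto i-1$ yields the bracketed expression in~(\ref{eq:fine}), $\sum_{i\ge1}\bigl[\binom{2n-2i-1}{n-1}-\binom{2n-2i-1}{n}\bigr]$. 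These are purely mechanical binomial manipulations; I would present them compactly rather than belabor the index bookkeeping, but I would note where the summation ranges truncate naturally (the binomials vanish once $2n-2i-1<n$).

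The remaining step is to show that this sum of differences of ballot numbers equals the displayed closed form $\sum_{i=1}^{n-1}\frac{i}{n-i}\binom{2n-2i}{n}$. I would rewrite each bracketed term: for a single value $i$, setting $N=2n-2i-1$, the quantity $\binom{N}{n-1}-\binom{N}{n}$ is a ballot-type number, and the standard identity $\binom{N}{k-1}-\binom{N}{k}=\frac{N+1-2k+1}{\,?\,}\cdots$ should be massaged---more usefully, I would pass to $\binom{2m-1}{m-1}-\binom{2m-1}{m}$ forms by writing $n-1=m-1$ with $m=n-i$ is not quite right, so instead I would use the direct identity $\binom{2n-2i-1}{n-1}-\binom{2n-2i-1}{n}=\frac{i}{2n-2i-1}\binom{2n-2i-1}{n}=\frac{i}{n-i}\binom{2n-2i-2}{n-1}$ and then observe $\frac{i}{n-i}\binom{2n-2i-2}{n-1}=\frac{i}{n-i}\binom{2n-2i}{n}\cdot\frac{(\text{ratio of binomials})}{}$; the cleanest route is probably to recall that ballot numbers satisfy $\binom{2k}{k}-\binom{2k}{k+1}$-style reductions and simply verify the per-term equality $\binom{2n-2i-1}{n-1}-\binom{2n-2i-1}{n}=\frac{i}{n-i}\binom{2n-2i}{n}$ by expanding both sides as ratios of factorials. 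I expect this last per-term binomial identity to be the main obstacle---not because it is deep, but because getting the indices and the factor $\frac{i}{n-i}$ to line up exactly requires care; once it is in hand, summing over $i$ from $1$ to $n-1$ (the upper limit being where the summand first vanishes) gives~(\ref{fine}) term by term, and the identification of~(\ref{fine}) with the Fine numbers is the known fact cited from~\cite{survey}, which I would simply invoke.
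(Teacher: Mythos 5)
Your proposal follows the paper's own route exactly: the same unambiguous patterns $\La\Dt\Dc\Ra$ and $\La\Lc\La\Dt\Dc\Ra\Lc\Ra$, the same subtraction trick isolating protected roots, and the same specialization of~(\ref{leafless}) leading to~(\ref{eq:fine}); your additional per-term identity $\binom{2n-2i-1}{n-1}-\binom{2n-2i-1}{n}=\frac{i}{n-i}\binom{2n-2i}{n}$ is correct and simply makes explicit the ballot-number-to-Fine-number step the paper delegates to the cited reference. One small correction to your justification: since $\La\Dt\Dc\Ra$ can match several nodes of one tree, (\ref{leafless}) counts occurrences rather than ``trees with weight $0$ or $1$''---the argument still stands because every protected-nonroot occurrence cancels in the difference, leaving at most one (root) occurrence per tree, which is exactly the point to spell out.
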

\noindent
See also~\cite[\S4]{survey} and references therein.

Let us refer to a leaf on level one, just below the root of a tree, as a \emph{stump}.
Protected trees are stump-free.

Stumps in trees correspond to hills---that is, level 1 peaks---in lattice paths.
The path in Figure~\ref{fig:path} has one hill, just as the tree in Figure~\ref{fig} has one stump.
So, trees sans stumps are one and  the same as paths without hills.
It follows that the above proposition counts hill-less paths, a.k.a.\@ Fine paths.
In other words,
the number of lattice paths of length $2n$ starting and ending on the baseline $y=0$ with no \textit{hills} peaking at $y=1$ is  also counted by the Fine numbers~\cite{Deutsch}.

\section{Tree Stumps and Dyck Hills}\label{sec:stump}

A lone lozenge pattern $\Bx$ in formula (\ref{leaves}) counts leaves.
The total number of leaves in all trees with $n$ edges is known to be $\binom{2n-1}{n-1}=\frac12\binom{2n}n$,
which is half the nodes in all the trees~\cite{DY},~\cite[Cor.\@ 2.1]{Enum}.
The number of stumps in a set of trees is
the number of leaves
minus the number of non-stump leaves, the
latter counted by $\La\Lc\La\Lc\Bx\Lc\Ra\Lc\Ra$.
There are 35 leaves in Figure~\ref{fig:pro}, 14 of which are stumps.

Generalizing this example and that of the prior section,
 we focus on the  case of a single unambiguous tree pattern $p$.
Let $p'$ be the embedded pattern $\La\Lc \,p\, \Lc\Ra$.
We can use (\ref{leafless}) to count occurrences of both $p$ and $p'$, the disparity
between them only being that the latter has one more node $\La\cdots\Ra$ and two extra ellipses $\Lc$.
Taking the difference between their two enumerations, we have
\begin{align}
\label{diff}
\sum\limits_{i=0}^{w-a}
\binom{d+i-1}{i}
\left[
\binom{2w-a+b-2i-1}{w+b+c-1}
-
\binom{2w-a+b-2i-1}{w+b+c}
\right]\,,
\end{align}
where the parameters refer to $p$ and $w=n-v-c+1$.
This brings us to the following:

\begin{theorem}[Root Pattern]\label{lem:root}
The number of $n$-edge trees with an occurrence of an unambiguous pattern at the root---%
a pattern containing $v$ nodes, 
$a$ light triangles, $b$ light ellipses, $c$ dark triangles, and $d$ dark ellipses---%
is
\begin{align}
\label{rooted}
&
\sum\limits_{i=0}^{w+(b-a)/2-1}
\frac{a+b+2c+2i}{2w-a+b-2i}
\binom{d+i-1}{i}
\binom{2w-a+b-2i}{w+b+c}
\,,
\end{align}
where $w=n-v-c+1$.
\end{theorem}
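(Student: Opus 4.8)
The plan is to build on the reduction already set up just before the statement. Write $p'$ for the embedded pattern $\La\Lc\,p\,\Lc\Ra$. An occurrence of $p'$ at a node $x$ means that some child subtree of $x$ matches $p$; hence occurrences of $p'$ correspond exactly to occurrences of $p$ at \emph{non-root} nodes, and the difference between the number of occurrences of $p$ and of $p'$ counts the occurrences of $p$ \emph{at} roots. Since $p$ is unambiguous, it matches the root of any given tree in at most one way, so this difference of occurrence counts is actually a count of \emph{trees} — those having $p$ at the root. This is the step that converts the instance count of~(\ref{diff}) into the tree count asserted by the theorem.

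Next I would record the parameters of $p'$: it has $v+1$ nodes, the same $a$ light triangles, $b+2$ light ellipses, and the same $c$ dark triangles and $d$ dark ellipses, whence $w' = n-(v+1)-c+1 = w-1$. Feeding $p$ and $p'$ in turn into~(\ref{leafless}) (equivalently, Corollary~\ref{lem:one} with $\ell=0$), and checking that the two running indices and the inner binomial arguments $2w-a+b-2i-1$ coincide, the two sums combine into the single sum~(\ref{diff}), whose summand is $\binom{d+i-1}{i}\bigl[\binom{N}{K-1}-\binom{N}{K}\bigr]$ with $N=2w-a+b-2i-1$ and $K=w+b+c$.

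The arithmetic core is the ballot-number identity $\binom{N}{K-1}-\binom{N}{K}=\frac{2K-N-1}{N+1}\binom{N+1}{K}$, applied term by term. Substituting $N$ and $K$ yields $N+1=2w-a+b-2i$ and, after cancellation, $2K-N-1=a+b+2c+2i$, so the summand turns into $\frac{a+b+2c+2i}{2w-a+b-2i}\binom{d+i-1}{i}\binom{2w-a+b-2i}{w+b+c}$, which is precisely the summand of~(\ref{rooted}). To finish I would justify shrinking the summation range $0\le i\le w-a$ inherited from~(\ref{diff}) down to $0\le i\le w+(b-a)/2-1$: as soon as $2w-a+b-2i\le 0$ the binomial $\binom{2w-a+b-2i}{w+b+c}$ vanishes (under the standing convention that $\binom{n}{k}=0$ when $n<k$), so those terms contribute nothing, and truncating there simultaneously removes the single index $i=w+(b-a)/2$ at which the denominator would be nonpositive.

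I expect the main obstacle to be exactly this end-of-sum bookkeeping rather than anything substantive: one must check that the ``extra'' term at $i=w-a$, which appears in the enumeration of $p$ but not in that of $p'$, indeed vanishes so that~(\ref{diff}) is a clean bracketed difference; handle the apparent $0/0$ at $i=w+(b-a)/2$ by the truncation above; and spell out carefully why unambiguity of $p$ makes the occurrence difference a genuine count of trees. Everything else is the one-line ballot identity together with the parameter substitution $v\mapsto v+1$, $b\mapsto b+2$ for $p'$.
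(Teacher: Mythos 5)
Your proposal is correct and follows essentially the same route as the paper: the paper likewise obtains the theorem by applying (\ref{leafless}) to $p$ and to the embedded pattern $\La\Lc\,p\,\Lc\Ra$, taking the difference (\ref{diff}), and simplifying the bracketed ballot-type difference into the summand of (\ref{rooted}). Your explicit use of the identity $\binom{N}{K-1}-\binom{N}{K}=\frac{2K-N-1}{N+1}\binom{N+1}{K}$ and the end-of-range bookkeeping just spell out steps the paper leaves implicit.
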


As a trivial example, trees with a unary root---called \emph{planted} trees---%
are characterized by the root pattern $\La\Lt\Ra$,
with $v=a=1$ and $b=c=d=0$, forcing $i=0$:
\[
\frac{1}{2n-1}
\binom{2n-1}{n}
 = C_{n-1}
\,,
\]
just as one should expect.
As long as $n>1$, the planted trees are also characterized by
$\La\Dt\Ra$, with $v=c=1$ and $a=b=d=0$:
\[
\frac{1}{n-1}
\binom{2n-2}{n}
= C_{n-1}
\,.
\]
If the pattern is $\La\Lc\Ra$, then all (nonleaf)  trees are counted,
with $v=b=1$ and $a=c=d=0$:
\[
\frac{1}{2n+1}
\binom{2n+1}{n}
= C_n
\,.
\]
On the other hand, the root pattern is $\La\Dc\Ra$ counts
protected trees having no stumps
with $v=d=1$ and $a=b=c=0$:
\[
\sum\limits_{i=0}^{n-1}
\frac{i}{n-i}
\binom{2n-2i}{n}
\,,
\]
the Fine numbers, as before,
except that $\La\Dc\Ra$ may match an edgeless ($n=0$) tree $\La\Ra$,
while the pattern $\La \Dt \Dc \Ra$, which we used earlier, cannot.

Suppose now that we wish to count 3-protected trees, that is, trees whose root has no childless children or grandchildren~\cite{kpro}.
The pattern 
\[
\big\langle\underbrace{\La\Dt\Dc\Ra\cdots\La\Dt\Dc\Ra}_{c\mbox{ \scriptsize times}}\big\rangle
\]
counts nodes with $c$ children, all of whom are (2-) protected.
Using formula (\ref{diff}) for root patterns, with $d=c$, $v=c+1$, and $a=b=0$,
and summing for $c$, and  assuming $n\geq  3$, we obtain 
\begin{align*}
&
\sum_{c=1}^{n/2}
\sum\limits_{i=0}^{n-2c}
\binom{c+i-1}{i}
\left[
\binom{2n-4c-2i-1}{n-c-1}
-
\binom{2n-4c-2i-1}{n-c}
\right]
\end{align*}
for the number of 3-protected trees with $n$ edges.
This should be equivalent to the closed form
\begin{align*}
\sum_{c=1}^{(2n-1)/5}  
(-1)^{c-1}
\left[
\binom{2n-5c-1}{n-c-2}-\binom{2n-5c-1}{n-c+1}
\right]
\,,
\end{align*}
which is the $k=3$ case of the general formula for $k$-protection
given in \cite[Prop.\@ 5.5]{prono}.
For $n=4$, 
this is $[2-0]-[0-0]=2$, the last two  trees  in Figure~\ref{fig:pro}.
It likewise counts 
the number of hill-free Dyck paths  having also no peaks at level  2;
see sequence \seqnum{A114627}.

Trees with $n$ edges and 
root pattern
\[\big\langle \overbrace{ \Lt\Lc \cdots \Lt\Lc  \Lt\Lc}^{r \mbox{ \scriptsize times}}  \big\rangle\]
 are equinumerous to ordered forests (the order of the trees in the forest matters) with $n$ edges and $r$ trees---disallowing edgeless trees in the forest.
 We need the triangles because an ellipsis alone can match an empty sequence of subtrees; this way we have partitioned the children of the root into $r$ groups, each representing a nonempty tree in the forest.
 There may be many ways to partition, but each corresponds to a different forest.
We derive
\begin{align}\nonumber
&
\sum\limits_{k=0}
\binom{n-1}{r+k-1}
\binom{n}{n-k}
-
\sum\limits_{k=0}
\binom{n}{r+k+1}
\binom{n-1}{n-k-1}
\\{}={}&\nonumber
\sum\limits_{k=0}
\binom{n-1}{n-r-k}
\binom{n}{k}
-
\sum\limits_{k=0}
\binom{n}{n-r-1-k}
\binom{n-1}{k}
\\{}={}&
\binom{2n-1}{n-r}
-
\binom{2n-1}{n-r-1}
\,.\label{for}
\end{align}
Accordingly,
\begin{proposition}
The number of ordered forests consisting of $r$ nonleaf trees and containing $n$ edges altogether 
is
\begin{align*}
\frac{r}{n}
\binom{2n}{n-r}
\,.
\end{align*}
\end{proposition}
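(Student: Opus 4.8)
The plan is to recognize that the count in question has, in effect, already been produced as equation~(\ref{for}), so that all that remains is (a) to justify carefully the bijection between these forests and $n$-edge trees carrying the indicated root pattern, and (b) to collapse the difference of ballot numbers in~(\ref{for}) into the advertised product. Step (b) is one line; step (a) is where the real content lies.

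First I would make the bijection explicit. Given an ordered forest $(T_1,\dots,T_r)$ of nonleaf trees with $n$ edges in all, adjoin a fresh root and, for $j=1,\dots,r$ in order, hang the children of the root of $T_j$ (with their subtrees) directly beneath the new root, recording the boundary just after the $j$th block. Since each $T_j$ is nonleaf, its root has at least one child, so every block is nonempty; counting nodes, the new tree has $1+\sum_j(|T_j|-1)$ of them, hence exactly $n$ edges. Conversely, an $n$-edge tree together with a partition of its root's children into $r$ nonempty consecutive blocks rebuilds the forest, and such a partition is precisely what one occurrence of the root pattern $p=\big\langle \Lt\Lc\cdots\Lt\Lc\,\Lt\Lc\big\rangle$ (with $r$ copies of $\Lt\Lc$, the $\Lt$ in each group forcing nonemptiness) records at the root. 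Distinct forests yield distinct (tree, partition) pairs, so the number of these forests equals the number of occurrences of $p$ among $n$-edge trees.

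Next, note that $p$ has $v=1$, $a=b=r$, and $c=d=0$, so $w=n-v-c+1=n$; feeding these into the single-pattern enumeration of Corollary~\ref{lem:one} (equivalently, formula~(\ref{leafless})), applied to $p$ and to the wrapped pattern $\La\Lc\,p\,\Lc\Ra$, and taking the difference, namely~(\ref{diff}), which is evaluated in~(\ref{for}) to $\binom{2n-1}{n-r}-\binom{2n-1}{n-r-1}$. It then remains only to simplify: writing each binomial as a ratio of factorials and combining over the common denominator $(n-r)!\,(n+r)!$ yields $\frac{2r\,(2n-1)!}{(n-r)!\,(n+r)!}=\frac{r}{n}\binom{2n}{n-r}$, the claimed value. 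The step I expect to need the most care is not the algebra but the bijection: one must check that the ``nonleaf'' restriction on the forest side corresponds exactly to demanding a $\Lt$ in each $\Lt\Lc$ group, and that the root-pattern formula here must be read as a count of occurrences---since a single root may be split in many ways---which is precisely what matches forests one-for-one.
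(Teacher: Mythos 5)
Your proposal is correct and follows essentially the same route as the paper: it identifies forests with occurrences of the root pattern $\big\langle\Lt\Lc\cdots\Lt\Lc\big\rangle$ (the $\Lt$ in each group enforcing nonemptiness), applies the root-pattern difference~(\ref{diff}) to get $\binom{2n-1}{n-r}-\binom{2n-1}{n-r-1}$ as in~(\ref{for}), and simplifies to $\frac{r}{n}\binom{2n}{n-r}$. You merely spell out the bijection and the final binomial algebra more explicitly than the paper does.
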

\noindent
This is Catalan's triangle, as per~\cite{Shapiro}, listed at~\seqnum{A039598},
with special cases: 
$r=2$~\seqnump{A002057};
$r=3$~\seqnump{A003517};
$r=4$~\seqnump{A003518};
$r=5$~\seqnump{A003519}; and
$r=n-2$~\seqnump{A014106}.

In \cite{1700}, it was shown that the sequence \seqnum{A001700} enumerates $n$-edge forests, and indeed it is easy to see that
\[\sum_{r=1}^n (\ref{for}) = \binom{2n-1}n\,.\]

The pattern $\La\Dc\Bx\Lc\Ra$ counts eldest leaves.
Since any one tree can have only one eldest stump,
subtracting occurrences of  $\La\Lc\!\La\Dc\Bx\Lc\Ra\!\Lc\Ra$ 
counts trees with at least one stump---making it unprotected.
There are 8 such unprotected trees in Figure~\ref{fig:pro}.
Using formula (\ref{diff}),  with $v=2$, $a=c=0$, and $b=d=1$, 
we obtain the following enumeration of unprotected trees:
\begin{align}
&
\sum\limits_{i=0}
\binom{2n-2i-2}{n-1}
-
\sum\limits_{i=0}
\binom{2n-2i-2}{n}
\label{untree}
\end{align}
for $n>1$.
This is the difference between two recorded sequences, 
\seqnum{A014300} and \seqnum{A172025},
and provides an alternative closed form for the latter.

As must be the case, the protected (\ref{eq:fine}) and unprotected (\ref{untree}) trees add up to all trees, which are counted by the Catalan numbers:
\begin{align*}
&
\sum\limits_{i=0}
\left[
\binom{2n-2i-1}{n}
 + 
\binom{2n-2i-2}{n}
\right]
-
\sum\limits_{i=0}
\left[
\binom{2n-2i-1}{n-1}
+
\binom{2n-2i-2}{n-1}
\right]
\\{}={}&
\binom{2n}{n}
-
\binom{2n}{n-1}
=C_n
\;.
\end{align*}

When there can be more than one occurrence of a pattern at the root,
formula (\ref{rooted}) still counts root occurrences, if not trees.
For example, the pattern $\La\Lc\Bx\Lc\Ra$ occurs once at the root for each stump.
Letting $v=2$, $a=c=d=0$, and $b=2$ in (\ref{rooted}), we obtain
\[
\frac{1}{n+1}
\binom{2n+2}{n} 
\,.
\]

It follows that
\begin{proposition}
The number of stumps (level-one leaves) within ordered trees is counted by the Catalan numbers.
\end{proposition}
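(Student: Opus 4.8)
The plan is to read the count directly off the root-pattern machinery just built. I would begin from the observation that the pattern $p:=\La\Lc\Bx\Lc\Ra$ matches a node once for every leaf among its children---the position chosen for the $\Bx$ fixes the split of the remaining children between the two ellipses---so that the occurrences of $p$ \emph{at the root} of a tree are exactly its stumps. Next I would note that, although $p$ is ambiguous at the root (a root with several leaf children matches it several times), the passage from (\ref{diff}) to formula (\ref{rooted}) never appealed to unambiguity: it used only Corollary~\ref{lem:one}, which enumerates all occurrences---overlapping or not---of a single composite pattern, together with the difference trick. Here that difference is (occurrences of $p$) minus (occurrences of $p':=\La\Lc\,p\,\Lc\Ra$), and since every occurrence of $p'$ amounts to an occurrence of $p$ at some node paired with that node's unique parent, the difference records precisely the root occurrences of $p$. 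Summed over all $n$-edge trees, it is the total number of stumps, and formula (\ref{rooted}) evaluates it---so Theorem~\ref{lem:root}, read as counting root occurrences, does the job once fed the right parameters.

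The second step is purely computational. Specializing (\ref{rooted}) to $p=\La\Lc\Bx\Lc\Ra$ means $v=2$, $a=c=d=0$, $b=2$, whence $w=n-v-c+1=n-1$. Because $d=0$, the factor $\binom{d+i-1}{i}=\binom{i-1}{i}$ kills every term with $i\geq1$, leaving only the $i=0$ term,
\[
\frac{a+b+2c}{2w-a+b}\binom{2w-a+b}{w+b+c}=\frac{2}{2n}\binom{2n}{n+1}=\frac1n\binom{2n}{n+1}=\frac1{n+1}\binom{2n}{n}=C_n\,,
\]
using $\binom{2n}{n+1}=\tfrac{n}{n+1}\binom{2n}{n}$. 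I would close with a sanity check against Figure~\ref{fig:pro} ($n=4$: $14=C_4$ stumps) and the remark that, through the bijection between trees and Dyck paths, the statement is equivalent to ``the total number of hills over all length-$2n$ Dyck paths is $C_n$.''

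I do not expect a genuine obstacle: the heavy lifting already lives in Theorem~\ref{thm:new}, Corollary~\ref{lem:one}, and Theorem~\ref{lem:root}, and what remains is one binomial identity. The single delicate point is the first step's claim that (\ref{rooted})---proved for \emph{unambiguous} root patterns, where it counts trees---still counts root occurrences for the ambiguous $p$; this must be argued by verifying that nothing in the derivation from (\ref{diff}) onward used unambiguity. Should a fully self-contained argument be wanted, I would instead mark a stump directly: a tree with a distinguished leaf child of the root has generating function $T(x)\cdot x\cdot T(x)=xT(x)^2$, where $T(x)=\sum_{n\ge0}C_nx^n$ is the ordered-tree series and satisfies $xT(x)^2=T(x)-1$; extracting coefficients gives $[x^n]\bigl(T(x)-1\bigr)=C_n$ stumps among the $n$-edge trees for $n\ge1$.
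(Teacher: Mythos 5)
Your proposal is correct and follows essentially the paper's own route: it specializes the root-pattern formula (\ref{rooted}) to the pattern $\La\Lc\Bx\Lc\Ra$ with $v=b=2$, $a=c=d=0$ (so $w=n-1$, only $i=0$ survives), relying on the very remark the paper makes that (\ref{rooted}) counts root occurrences even when the pattern matches the root more than once. One small point in your favor: your evaluation $\frac{1}{n}\binom{2n}{n+1}=\frac{1}{n+1}\binom{2n}{n}=C_n$ is the arithmetically correct outcome (it gives $14$ for $n=4$, matching Figure~\ref{fig:pro}), whereas the paper's displayed intermediate value $\frac{1}{n+1}\binom{2n+2}{n}$ equals $C_{n+1}$ and appears to be a typo.
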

\noindent
This comes as no surprise, since a stump (or for that matter any fixed subtree of the root) splits ordered trees with $n$ edges into two trees with a total of $n-1$ edges (or $n$ minus the size of any other divider), and the Catalan numbers are well-known to satisfy the corresponding recurrence,
as illuminated by the binary trees.

Consider now what a tree with  $r\geq 1$ stumps looks like.
The pattern 
\[\big\langle  \Dc \!\overbrace{\Bx \Dc \cdots \Dc \Bx \Dc}^{r \mbox{ \scriptsize times}}  \big\rangle\]
with $r$ embedded leaf patterns matches a node with exactly $r$
childless children.
Subtracting  occurrences of  $\La \Lc \La  \Dc \Bx \Dc \cdots \Dc \Bx \Dc \Ra  \Lc \Ra$ gives the number of roots
with $r$ stumps.
With the appropriate values in (\ref{diff}), viz.\@
$a=b=c=0$ and $v=d=r+1$, we obtain the following enumeration:

\begin{proposition}\label{thm:stump}
The number of trees with $n$ edges and $r$
stumps is
 $1$ for $r=n$,
is $0$ for $r=n-1$, and, for $r<n-1$, is
\begin{align}
\sum_{i}
\frac{i}{n-r-i}
\binom{r+i}{r}
\binom{2n-2r-2i}{n-r}
\,.
\label{hill}
\end{align}
\end{proposition}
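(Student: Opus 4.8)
The plan is to express ``the root has exactly $r$ stumps'' as a difference of two composite‑pattern occurrence counts, feed the parameters into formula~(\ref{diff}), collapse the resulting alternating pair of binomial coefficients with a ballot‑number identity, and dispose of the two extreme values of $r$ by inspection. First I would check that the pattern $p=\La\Dc\,\Bx\,\Dc\cdots\Dc\,\Bx\,\Dc\Ra$ carrying $r$ embedded lozenges is \emph{unambiguous} and matches precisely the nodes with exactly $r$ childless children: since a dark ellipsis $\Dc$ absorbs only nonleaf subtrees, any match is forced to place the $r$ lozenges on the $r$ childless children in left‑to‑right order and the dark ellipses on the nonleaf children between them, so $p$ matches such a node in exactly one way and no other node at all. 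Consequently $\La\Lc\,p\,\Lc\Ra$ matches each non‑root node with exactly $r$ childless children once, so (occurrences of $p$) minus (occurrences of $\La\Lc\,p\,\Lc\Ra$) over all $n$‑edge trees equals the number of $n$‑edge trees whose root has exactly $r$ stumps, since every tree has a unique root. This is exactly the situation of Theorem~\ref{lem:root}, applied to the unambiguous pattern $p$.

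Next, $p$ has $v=r+1$ nodes, $a=b=c=0$, and $d=r+1$ dark ellipses, so $w=n-v-c+1=n-r$. Substituting into~(\ref{diff}) and writing $m=n-r$, the difference becomes $\sum_i\binom{r+i}{i}\bigl[\binom{2m-2i-1}{m-1}-\binom{2m-2i-1}{m}\bigr]$. When $r<n-1$, i.e.\ $m\ge 2$, the $i=0$ summand vanishes because $\binom{2m-1}{m-1}=\binom{2m-1}{m}$, and the top summand $i=m$ vanishes as well because its bracket is $\binom{-1}{m-1}-\binom{-1}{m}=0$. On the interior range $1\le i\le m-1$ the ballot‑number identity $\binom{2m-2i-1}{m-1}-\binom{2m-2i-1}{m}=\tfrac{i}{m-i}\binom{2m-2i}{m}$ — the very one used to pass from~(\ref{eq:fine}) to the Fine numbers~(\ref{fine}), and checked by writing $\binom{2j-1}{j-i}$ and $\binom{2j-1}{j-i-1}$ as $\tfrac{j\pm i}{2j}\binom{2j}{j-i}$ with $j=m-i$ — turns the sum into $\sum_i\frac{i}{n-r-i}\binom{r+i}{r}\binom{2n-2r-2i}{n-r}$, which is~(\ref{hill}) (using $\binom{r+i}{i}=\binom{r+i}{r}$). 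Equivalently, this is just the passage from~(\ref{diff}) to~(\ref{rooted}) of Theorem~\ref{lem:root}, specialized to the present parameters.

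Finally the two boundary values of $r$ must be argued directly, since~(\ref{diff}) degenerates into negative‑index binomials there. For $r=n$, the pattern $p$ occupies all $n+1$ nodes and matches only the tree whose root has $n$ leaf children, while $\La\Lc\,p\,\Lc\Ra$ cannot fit, so the count is $1$. For $r=n-1$: a root carrying $n-1$ leaf children already accounts for $n-1$ of the $n$ edges, and the single remaining edge cannot be placed — a further leaf child would create an $(r+1)$st stump, a further nonleaf child of the root would require at least two more edges, and extending one of the stumps would destroy it — so no such tree exists and the count is $0$. The step I expect to be the real obstacle is the bookkeeping in the middle paragraph: the boundary terms of the $i$‑sum must be read with the ``number of configurations'' convention (so that $\binom{-1}{m-1}=0$ rather than $(-1)^{m-1}$), for only then do they vanish and allow the ballot‑number identity to be applied cleanly on the interior range; the identity itself and the cosmetic rewriting of the binomials are routine.
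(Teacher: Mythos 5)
Your proposal is correct and follows essentially the same route as the paper: the pattern $\La\Dc\Bx\Dc\cdots\Bx\Dc\Ra$ minus its embedding $\La\Lc\cdots\Lc\Ra$, plugged into (\ref{diff}) with $a=b=c=0$, $v=d=r+1$, and simplified by the same ballot-number step that turns (\ref{diff}) into (\ref{rooted}). Your extra care with the degenerate binomials at the ends of the $i$-range and the separate treatment of $r=n$ and $r=n-1$ is exactly the right reading of the paper's conventions, not a deviation from its argument.
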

\noindent
For $n=4$, as in Figure~\ref{fig},
the number of trees with $r=$ 0, 1, 2, 3, 4 stumps is
6, 4, 3, 0, 1, respectively.

When $r=0$, we retrieve the Fine numbers (\ref{fine}).
When $r=n-2$, this yields $n-1$, 
with all but one of the $n-1$ children of the root being childless.
When $r=n-3$, this yields $2(n-2)$, 
with two ways to bless each of $n-2$ children with two progeny.

On account of the bijection with lattice paths,
Proposition~\ref{thm:stump} also enumerates paths of length $2n$ sporting $r>0$ hills.

\section{Discussion}
Our tree enumeration formul\ae---with their ability to capture a wide variety of patterns---%
have been used here, in particular, to count trees with various conditions on the number of leaves just below the root.
We note that many of the summations we have seen have double the index appearing within the binomial coefficients 
(eqs.~\ref{one}--\ref{hill}).

\subsection*{Acknowledgement}
I thank a referee for suggesting applying the method herein to the enumeration of $(k>2)$-protected trees.

\bibliography{More}
\bibliographystyle{plain}

\end{document}